\documentclass{article}

\usepackage{authblk} 
\usepackage{balance} 
\usepackage[numbers]{natbib} 
\usepackage{hyperref}
\usepackage{todonotes}
\usepackage{pifont}
\usepackage{amsmath,amsfonts,amssymb,float,color}
\usetikzlibrary{decorations.pathreplacing,arrows.meta,positioning}
\usepackage{amsthm}

\title{Selfish routing games with priority lanes}

\author[1]{Yang Li}
\author[2]{Alexander Skopalik}
\author[2]{Marc Uetz}

\affil[1]{School of Mathematics and Statistics, Northwestern Polytechnical University, Xi'an, Shaanxi 710072, China}
\affil[2]{Department of Applied Mathematics, University of Twente, Enschede 7500, AE, The Netherlands}

\newtheorem{example}{Example}

\newtheorem{observation}{Observation}
\newtheorem{corollary}{Corollary}
\newtheorem{theorem}{Theorem}
\newtheorem{definition}{Definition}

\newtheorem{lemma}{Lemma}

\newcommand{\BibTeX}{\rm B\kern-.05em{\sc i\kern-.025em b}\kern-.08em\TeX}

\begin{document}
\maketitle 
\begin{abstract}

We study selfish routing games  where users can choose between regular and priority service for each network edge on their chosen path.  Priority users pay an additional fee, but in turn they may travel the edge prior to non-priority users, hence experiencing potentially less congestion. For this model, we establish existence of equilibria for linear latency functions and prove uniqueness of edge latencies, despite potentially different strategic choices in equilibrium. Our main contribution demonstrates that marginal cost pricing achieves system optimality: When priority fees equal marginal externality costs, the equilibrium flow coincides with the socially optimal flow, hence the price of anarchy equals $1$. This voluntary priority mechanism therefore provides an incentive-compatible alternative to mandatory congestion pricing, whilst achieving the same result. We also discuss the limitations of a uniform pricing scheme for the priority option. 
\end{abstract}



\section{Introduction}\label{se1}

Managing congestion in modern transportation and communication networks increasingly relies on service differentiation rather than uniform tolling. Instead of imposing mandatory charges, many real-world systems introduce voluntary priority options such as express toll lanes on highways, priority security or boarding at airports, and premium bandwidth on digital platforms. These schemes offer users a choice between a basic or free regular service and a paid priority service that promises lower latency. Crucially, granting priority access creates negative externalities for non-paying users by reducing capacity or increasing delays. Motivated by these practices, we propose a two-tier selfish routing model in which each edge of the network is replicated into a regular lane and a priority lane, and users strategically self-select between them according to their latency-price trade-offs. In this framework, equilibrium flows arise endogenously from voluntary participation rather than compulsory tolls, enabling us to analyze when and how priority-based service differentiation can improve overall efficiency.

Classic results in selfish routing already highlight the inefficiencies stemming from individually rational behavior. 
About a century ago, Pigou \cite{pigou1920economics}  demonstrated that even in the simplest network with two parallel links, the equilibrium attained by selfish (nonatomic) users can exhibit higher total travel time than the system-optimal flow. Later, Braess' paradox \cite{braess1968paradoxon} showed that adding capacity to a network may paradoxically increase equilibrium travel times. These insights motivated the formalization of the Price of Anarchy (PoA), which quantifies the welfare loss due to selfish routing \cite{koutsoupias1999worst,roughgarden2002howbad,roughgarden2005selfish}. A high PoA indicates that user equilibria can significantly degrade network performance compared to the social optimum, underscoring the importance of mechanism design in routing games.

Another fundamental structural question in selfish routing concerns the existence and uniqueness of equilibria.
The foundational work of Beckmann, McGuire, and Winsten \cite{beckmann1956} showed that equilibria arise as minimizers of a convex potential function in nonatomic congestion games with continuous and nondecreasing latency functions, while Dafermos and Sparrow \cite{dafermos1969} established the variational inequality (VI) formulation of the traffic assignment problem. Smith \cite{smith1979} further analyzed conditions for existence, uniqueness, and stability, and Dafermos \cite{dafermos1980} unified equilibrium theory within the VI framework. Moreover, Bernstein and Smith \cite{bernstein1994} proved existence under lower semicontinuous and possibly discontinuous latencies, while De Palma and Nesterov \cite{de1998optimization} developed optimization-based formulations. In the algorithmic game theory community, Roughgarden and Tardos \cite{roughgarden2002howbad} reaffirmed that equilibria always exist under continuous, monotone latencies, and specifically showed that the price of anarchy is at most $4/3$ for linear latency functions.

Despite this rich literature, the existence of equilibria in the priority model as proposed in this paper does not follow directly: while the potential function associated with total flows remains convex, the induced strategy splits may be inconsistent across paths, which motivates another approach taken in this paper, namely using  a fixed-point argument.

A second influential line of work investigates how taxation can improve efficiency. A canonical remedy is marginal-cost (Pigouvian) pricing, which charges every user the congestion externality they inflict on other users, and thus restores optimality, e.g., \cite{pigou1920economics,DafermosTolls1973}. However, implementing Pigouvian taxes in heterogeneous populations is often difficult in practice. The required tolls can be complex, high, and politically unpopular \cite{cole2003pricing,gu2018congestion}. To address this, Wang et al.\cite{wang2016improved} proposed a step-function toll scheme that uses a few congestion thresholds to set discrete prices. More generally, edge-specific pricing tailored to different user types can reach the social optimum under suitable conditions \cite{cole2003pricing}. Similar pricing ideas have also been applied successfully in three-tier edge computing resource markets \cite{li2022pricing}. Harks, Schäfer, and Sieg \cite{harks2008} showed that tolls inducing a target flow always exist and provided a compact polyhedral characterization of feasible tolls. Note, however, that all these works  rely on compulsory payments. In practice, even small congestion fees often face low public acceptance \cite{gu2018congestion}.

Against this background, voluntary priority options offer a feasible middle ground between laissez-faire and strict regulation. By allowing users to pay for faster service while leaving standard access free, operators create a self-selection mechanism that both generates revenue and potentially improves efficiency. However, this also raises new questions: Can priority-based differentiation achieve socially optimal outcomes? Under what pricing rules does it improve upon unpriced routing? And what are the limits of simple, uniform pricing rules compared to edge-specific pricing? This paper gives answers to these questions.\\

\noindent\textbf{Main Contributions}\\

This paper develops and analyzes a rigorous model of selfish routing with priority lanes. Our key contributions are as follows:

\begin{itemize} 
    \item We introduce a new network flow model in which every edge offers users a paid priority option and a regular option, so that priority users impose additional delay for all regular users. For linear latency functions we prove, via a Kakutani correspondence, that an equilibrium always exists.
    \item While an edge’s split between priority and regular flow may be non-unique, we prove that the edge latencies are unique at equilibrium. 
    This is shown by using and adapting the classical approach via variational inequalities. 
    \item We study both edge-specific and uniform pricing schemes: We prove that under edge-specific marginal-cost pricing of the priority option, every equilibrium coincides with the optimum flow of the original instance, yielding a prive of anarchy equal to 1. In contrast, uniform pricing cannot, in general, induce optimal flows: We construct a family of Pigou-type networks in which the best uniform pricing still has a PoA of 4/3 in the worst case.
\end{itemize}

Overall, this work therefore provides a theoretical foundation for understanding tiered service networks in transportation and communication,  
offering new insights for transportation planners and network operators. Our main insight is that priority lanes, while obviously weaker  than taxation, still allow to achieve socially optimal outcomes purely based on voluntary participation. 


\section{The selfish routing model with priority lanes}\label{se2}

For a standard non-atomic network routing game on a directed graph $G=(V,E)$ with vertex set $V$ and edge set $E$, there are $k$ commodities which are characterized by source-destination vertex pairs $(s_{1},t_{1}),(s_{2},t_{2}),\dots, (s_{k},t_{k})$ and demands denoted by finite and non-negative rates $r_1, \dots, r_k$. Let $\mathcal{P}_{i}$ denote the set of $s_{i}$-$t_{i}$ paths and $\mathcal{P} = \cup_{i}\mathcal{P}_{i}$. A feasible flow is a function $f : \mathcal{P}\to \mathbb{R}^+$ so that $\sum_{P\in\mathcal{P}_{i}}f_{P}=r_{i}$ holds for all $i\in[k]$. For a fixed flow $f=(f)_{P}$, we have $f_{e}=\sum_{P: e\in P}f_{P}$ as the flow on each edge $e$. Moreover, there is a load-dependent latency function $\hat{\ell}_{e}$ associated with edge $e\in E$ and $\hat{\ell}_{e}$ is nonnegative and nondecreasing. In this paper, we focus on the linear case, where $\hat{\ell}_e(x) = a_e x + b_e$ with $a_e, b_e \ge 0$. In deviating slightly from the standard notation in selfish routing literature, we assume the effective latency (or cost) of a user on  edge $e$ with flow $f_e$ is the average with respect to $\hat{\ell}_e$, that is,
\begin{equation*}
    \ell_{e}(f_e) :=\frac{1}{f_e}\int_{0}^{f_e}\hat{\ell}_{e}(z)dz\,.\footnote{We naturally define $ \ell_{e}(0)=\hat{\ell}_e(0)$.}
\end{equation*}

We choose this model because later, we introduce the idea of priorities among two types of users, and it yields a cleaner model for how these affect each other. It is also consistent with the idea that users use an edge subsequently and we model the average latency. For a given flow $f$, the latency of a path $P$ is given by  $\ell_{P}(f)=\sum_{e\in P}\ell_{e}(f_{e})$ as usual. The cost $C(f)$ of a flow $f$ in $G$ is the total latency, i.e.,  
\begin{equation*}
    C(f)=\sum_{P\in\mathcal{P}}\ell_P(f)f_P=\sum_{e\in E}\ell_e(f_e)f_e
\end{equation*}
Then the triple $(G,r,\ell)$ is an instance of a network routing problem. A feasible flow minimizing $C(f)$ is the (socially) optimal latency flow.

We extend a standard routing game $(G,r,\ell)$ towards a routing game with priorities $(G,r,\tilde{\ell},\omega)$ as follows. Here, users have two options at each edge $e\in {E}$ of a network ${G}$, namely to use $e$ as a regular ($R$) or priority ($V$) user. Users who choose $V$ can pass through the edge $e$ with lower latency but must pay a priority cost $\omega_{e} \in \mathbb{R}^+$. Additionally, individuals using the priority lane will increase the latency cost of regular users. Intuitively, priority users pay to pass through the edge first, while regular users have to wait for priority users before they can pass. 
For ease of notation, we extend the set of edges $E$ by replacing each edge $e\in E$ by two edges $e^R$ and $e^V$ to obtain the {\em extended edge set} $\tilde{E}$. This allows us to employ the use of the corresponding set of  {\em extended paths} $\tilde{P}$. The users' strategic choices are then captured by  {\em extended flows}  $\tilde{f} : \mathcal{\tilde{P}}\to \mathbb{R}^+$, where $\tilde{f}_{e^V}$ denotes the flow of priority users passing through the priority lane of edge $e$, while $\tilde{f}_{e^R}$ denotes the flow of regular users through $e$. The total flow $f^t$ of an extended flow $\tilde{f}$ is then defined by $f^t_e=\tilde{f}_{e^R}+\tilde{f}_{e^V}$.

As before, instead of tracking the latency for each user, we average the edge cost over the cumulative flow per type, while all regular users are additionally delayed by the set of all priority users.
That means that the average perceived costs for priority users include the latency cost caused by other priority users plus the priority fee, but they are not affected by regular users. The average perceived costs for regular users, however, are affected by both types of users, resulting in the following perceived cost function per type of users:
\begin{eqnarray*}
    \tilde{\ell}_{e^V}(\tilde{f}_{e^V}):=\frac{1}{\tilde{f}_{e^V}}\int_{0}^{\tilde{f}_{e^V}}\hat{\ell}_{e}(z)dz+\omega_{e}\,,\\
    \tilde{\ell}_{e^R}(\tilde{f}_{e^R},\tilde{f}_{e^V}):=\frac{1}{\tilde{f}_{e^R}}\int_{\tilde{f}_{e^V}}^{\tilde{f}_{e^V}+\tilde{f}_{e^R}}\hat{\ell}_{e}(z)dz\,.
\end{eqnarray*}
Again, for flow values equal 0 we define $\tilde{\ell}_{e^R}(0,\tilde{f}_{e^V}):=\hat{\ell}_e(\tilde{f}_{e^V})$, and  $\tilde{\ell}_{e^V}(0):=\hat{\ell}_e(0)+\omega_e$.

Since we  consider linear latency functions in this work, i.e., $\hat{\ell}_{e}(x)=a_{e}x+b_{e}$, $a_{e}, b_{e} \ge 0$, the perceive cost function of priority users is
\begin{eqnarray} \label{eq:LV}
    \tilde{\ell}_{e^V}(\tilde{f}_{e^V})=\frac{1}{2}a_{e}\tilde{f}_{e^V}+b_{e}+\omega_{e}\,,
\end{eqnarray}
and the average perceive cost of regular users is 
\begin{eqnarray} \label{eq:LR}
    \tilde{\ell}_{e^R}(\tilde{f}_{e^R},\tilde{f}_{e^V})=a_{e}\tilde{f}_{e^V}+\frac{1}{2}a_{e}\tilde{f}_{e^R}+b_{e}\,.
\end{eqnarray}
The perceived cost function $\tilde{\ell}_{\tilde{P}}(\tilde{f})$ of an extended path $\tilde{P} \in \mathcal{\tilde{P}}$ is naturally defined as the sum of the costs of its corresponding edges from $\tilde{P}$.

\begin{definition} \label{def:ns}
An extended flow $\tilde{f}$ for $(G,r,\tilde{\ell},\omega)$ is an equilibrium flow if for all $\tilde{P}_1, \tilde{P}_2 \in \tilde{\mathcal{P}}_i$ with $\tilde{f}_{\tilde{P}_1}>0$, 
\begin{equation*} 
    \tilde{\ell}_{\tilde{P}_1}(\tilde{f})\le \tilde{\ell}_{\tilde{P}_2}(\tilde{f}).
\end{equation*}
\end{definition}

Before turning to the general analysis, we illustrate the perceived cost behavior with a simple single-edge example.

\begin{example}\label{ex:1}
    

\begin{figure}[t]
    \centering
    \includegraphics[width=0.8\linewidth]{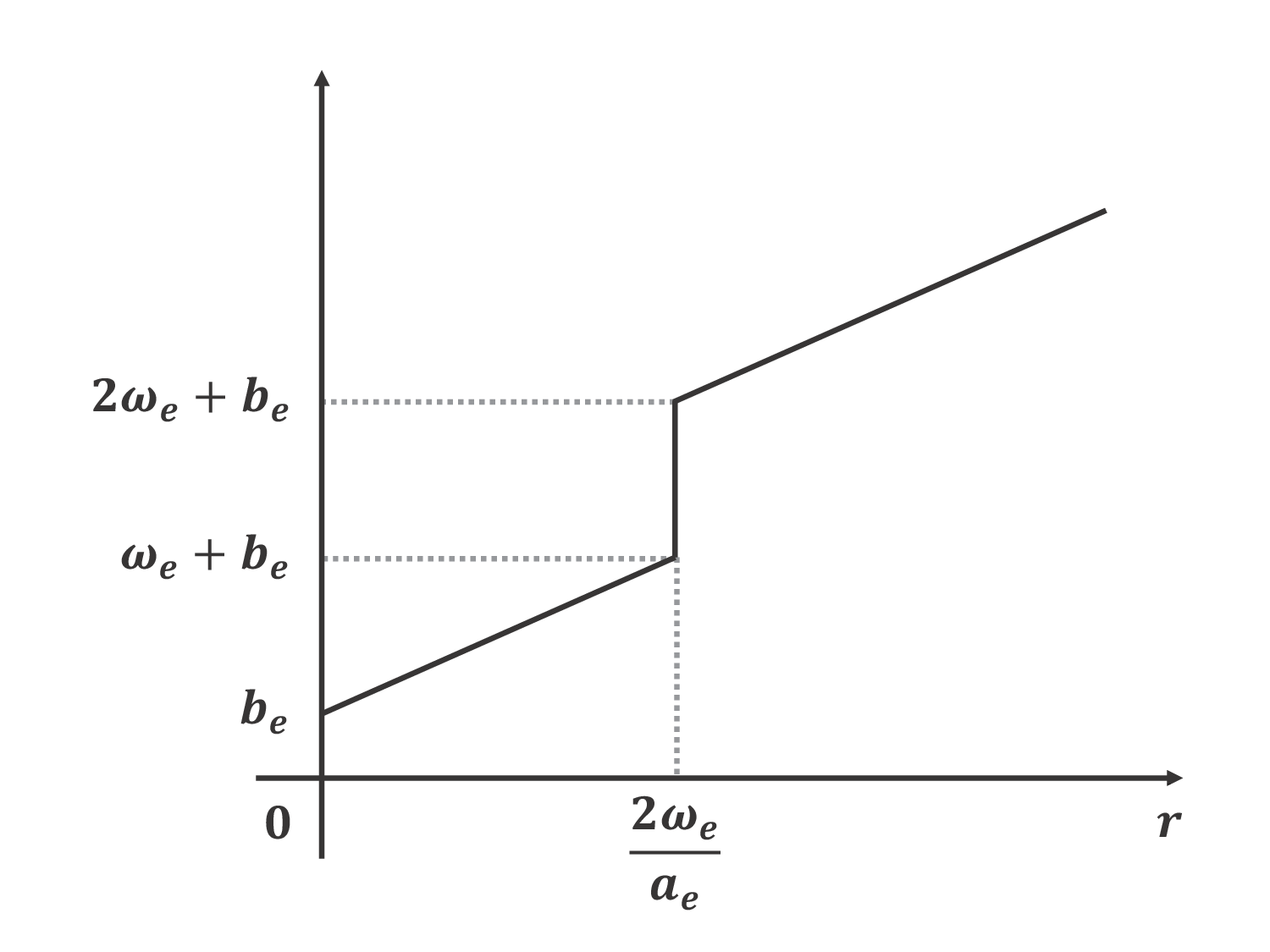}
    \caption{The perceived cost in equilibrium as a function of the total flow $r$ on  the single edge $e$ of Example \ref{ex:1}.}
    \label{fig:lf}
\end{figure}

We consider a network consisting of just a single edge $e$, a demand of $r$, and priority cost of $\omega_e$. The equilibrium depending on $\omega_e$ can be characterized as follows:
\begin{enumerate}
    \item If $r<\frac{2\omega_{e}}{a_{e}}$, all users use the regular lane as their perceive cost is  $$\tilde{\ell}_{e^R}(r,0)=\frac{1}{2}a_{e}r+b_{e} < b_e + \omega_{e} = \tilde{\ell}_{e^V}(0)$$ since $\omega_{e} > \frac12 a_e r$
    \item If $r>\frac{2\omega_{e}}{a_{e}}$, all users use the priority lane as 
   their perceived cost is $$\tilde{\ell}_{e^V}(r)=\frac{1}{2}a_{e}r+b_{e}+\omega_{e} < \tilde{\ell}_{e^R}(0,r)=a_{e}r+b_{e}$$ since $\omega_{e} < \frac12 a_e r$.
   \item  If $r=\frac{2\omega_{e}}{a_{e}}$, the users are indifferent between the two options, hence  any $\tilde{f}_{e^V} \in[0,r]$ and $\tilde{f}_{e^R} = r - \tilde{f}_{e^V}$ constitutes and equilibrium as 
   $$\tilde{\ell}_{e^V}(\tilde{f}_{e^V})=   \tilde{\ell}_{e^R}(\tilde{f}_{e^R},\tilde{f}_{e^V}).$$ 
\end{enumerate}

\end{example}

At equilibrium, we obtain the cost as a function of the total flow on edge $e$, as illustrated in Fig. \ref{fig:lf}. Interestingly, the equilibrium cost at flow value $2\omega_e/a_e$ therefore adopts any value in the range $[\omega_e+b_e, 2\omega_e+b_e]$.  Based on this, we formulate the following observation which will become useful later.
\begin{observation}
\label{observation1}
        If the total flow on an edge $e$ is $\frac{2\omega_{e}}{a_{e}}$  then for any choice of users to distribute over $\tilde{f}_{e^V}$ and $\tilde{f}_{e^R}$,  we have that  $\tilde{\ell}_{e^V}(\tilde{f}_{e^V})=   \tilde{\ell}_{e^R}(\tilde{f}_{e^R},\tilde{f}_{e^V}) \in [\omega_e+b_e,2\omega_e+b_e]$. 
\end{observation}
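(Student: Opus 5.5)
The plan is a direct computation from the linear formulas \eqref{eq:LV} and \eqref{eq:LR}, taking care of the conventions at zero flow separately. We assume $a_e>0$, as is implicit in the quantity $\frac{2\omega_e}{a_e}$. Write $x:=\frac{2\omega_e}{a_e}$ for the total flow, so that $\frac12 a_e x=\omega_e$. Parametrize an arbitrary split by $v:=\tilde f_{e^V}\in[0,x]$ and $\tilde f_{e^R}=x-v$.

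\textbf{Equality of the two costs.} First handle the generic case $0<v<x$, where both formulas \eqref{eq:LV} and \eqref{eq:LR} apply as stated. The priority cost is
\[
\tilde\ell_{e^V}(v)=\tfrac12 a_e v+b_e+\omega_e .
\]
The regular cost is
\[
\tilde\ell_{e^R}(x-v,v)=a_e v+\tfrac12 a_e(x-v)+b_e=\tfrac12 a_e v+\tfrac12 a_e x+b_e=\tfrac12 a_e v+\omega_e+b_e .
\]
The key identity is that $\tfrac12 a_e x$ equals exactly $\omega_e$, which is what makes the two expressions coincide. Then check the endpoints against the zero-flow conventions:
\begin{itemize}
\item For $v=0$: $\tilde\ell_{e^V}(0)=b_e+\omega_e$ and $\tilde\ell_{e^R}(x,0)=\tfrac12 a_e x+b_e=\omega_e+b_e$.
\item For $v=x$: $\tilde\ell_{e^R}(0,x)=\hat\ell_e(x)=a_e x+b_e=2\omega_e+b_e$ and $\tilde\ell_{e^V}(x)=\tfrac12 a_e x+b_e+\omega_e=2\omega_e+b_e$.
\end{itemize}
In both cases the two costs again agree, and they agree with the general formula $\tfrac12 a_e v+\omega_e+b_e$.

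\textbf{Range of the common value.} The common value is $\tfrac12 a_e v+\omega_e+b_e$, which is affine and nondecreasing in $v$. As $v$ ranges over $[0,x]$, the term $\tfrac12 a_e v$ ranges over $[0,\omega_e]$. Hence the common value lies in $[\omega_e+b_e,\,2\omega_e+b_e]$, and every value in this interval is attained.

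The only point requiring care is the zero-flow convention at the endpoints. Since the convention is defined via $\hat\ell_e$ evaluated at the current priority flow, this check is immediate rather than a real obstacle.
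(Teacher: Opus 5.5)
Your proposal is correct and takes the same route as the paper, which gets the observation from the direct computation in Example~\ref{ex:1}(3): substituting $\tilde f_{e^R}=\frac{2\omega_e}{a_e}-\tilde f_{e^V}$ into \eqref{eq:LV} and \eqref{eq:LR} gives the common value $\frac12 a_e\tilde f_{e^V}+b_e+\omega_e$, which ranges over $[\omega_e+b_e,2\omega_e+b_e]$. Your explicit check of the zero-flow conventions at the two endpoints is a small addition that the paper leaves implicit.
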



\subsection{Existence of equilibria.} 

Before establishing our first equilibrium existence result, we briefly discuss our methodological choice. In classical nonatomic routing games, the existence of a Wardrop equilibrium is typically shown via the minimization of a potential function over feasible flows. In our setting, although the potential function associated with the total edge flows remains convex and strictly increasing, 
it is not clear how 
this approach could be adapted to establish equilibrium existence. The reason is that, even though we can guarantee the existence of a total flow vector that minimizes the potential function, it is not straightforward how this solution can be used to define an equilibrium in the extended instance. Specifically, the complication is that the implied strategy distributions on some edges may turn out to be inconsistent. That is, different paths may require different splits between priority and regular users on one and the same edge.
This ambiguity motivates another approach 
to establish the existence of an equilibrium, namely via Kakutani’s fixed-point theorem which simplifies a similar proof of \cite{smith1979} as it is applied directly to our games at hand.


\begin{theorem} \label{theo:1} 
Given a price vector $\omega$, an instance $({G},{r},\tilde{\ell},\omega)$ 
admits at least one equilibrium flow. 
\end{theorem}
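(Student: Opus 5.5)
We prove Theorem~\ref{theo:1} with Kakutani's fixed-point theorem, applied on the space of extended path flows. Let $X=\{\tilde f\in\mathbb{R}_{\ge 0}^{\tilde{\mathcal P}} : \sum_{\tilde P\in\tilde{\mathcal P}_i}\tilde f_{\tilde P}=r_i \text{ for all } i\in[k]\}$. This is a product of scaled simplices, so it is nonempty, compact and convex. Each extended edge flow $\tilde f_{e^R},\tilde f_{e^V}$ is a linear function of $\tilde f$. By \eqref{eq:LV} and \eqref{eq:LR}, for linear latencies the perceived costs $\tilde\ell_{e^V}$ and $\tilde\ell_{e^R}$ are affine in these edge flows. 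They are defined everywhere, including at zero flow, where the closed forms agree with the conventions stated after the definitions. Hence each path cost $\tilde\ell_{\tilde P}(\tilde f)$ is a continuous (indeed affine) function on $X$. This continuity is the only place where linearity is really used.

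We define the best-response correspondence
\[
\Phi(\tilde f)=\Bigl\{ g\in X : g_{\tilde P}>0 \Rightarrow \tilde\ell_{\tilde P}(\tilde f)=\min_{\tilde Q\in\tilde{\mathcal P}_i}\tilde\ell_{\tilde Q}(\tilde f) \text{ for all } \tilde P\in\tilde{\mathcal P}_i,\ i\in[k]\Bigr\}.
\]
Equivalently, $\Phi(\tilde f)=\arg\min_{g\in X}\sum_{\tilde P} \tilde\ell_{\tilde P}(\tilde f)\,g_{\tilde P}$, which is a linear program over $X$ with the costs frozen at $\tilde f$. So $\Phi(\tilde f)$ is nonempty (each commodity can put all its demand on a cheapest extended path), convex, and compact (a face of $X$).

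The main step is to verify that $\Phi$ has a closed graph. Take $\tilde f^n\to\tilde f$ and $g^n\to g$ with $g^n\in\Phi(\tilde f^n)$. For every $h\in X$ we have $\langle c(\tilde f^n),g^n\rangle\le\langle c(\tilde f^n),h\rangle$, where $c$ denotes the vector of path costs. Since $c$ is continuous and all vectors are bounded, we can pass to the limit and obtain $\langle c(\tilde f),g\rangle\le\langle c(\tilde f),h\rangle$, so $g\in\Phi(\tilde f)$. (This is just Berge's maximum theorem.) Kakutani's theorem then gives a fixed point $\tilde f^*\in\Phi(\tilde f^*)$. By construction, every extended path with $\tilde f^*_{\tilde P}>0$ has minimum perceived cost within its commodity, which is exactly Definition~\ref{def:ns}.

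The one point needing care, and the expected obstacle, is continuity of the perceived costs at zero flow. A general nonlinear $\hat\ell_e$ would require checking that the averaged integrals extend continuously to the boundary. Here the explicit affine forms \eqref{eq:LV} and \eqref{eq:LR} settle it immediately: for example, $\tilde\ell_{e^R}(\tilde f_{e^R},\tilde f_{e^V})\to a_e\tilde f_{e^V}+b_e=\hat\ell_e(\tilde f_{e^V})$ as $\tilde f_{e^R}\to 0$. The non-uniqueness of the priority/regular split illustrated in Observation~\ref{observation1} causes no problem, because we never pass through total flows or a potential function.
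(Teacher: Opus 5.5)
Your proposal is correct and follows essentially the same route as the paper: Kakutani's theorem applied to the best-response correspondence on the polytope of extended path flows, with nonemptiness, convexity and closed graph all following from continuity of the affine perceived costs \eqref{eq:LV}--\eqref{eq:LR}. Writing the correspondence as the argmin of a linear program with frozen costs makes your closed-graph step (and the compactness of its values) a bit tidier than the paper's version, but it is the same argument.
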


\begin{proof}
We proof the theorem by a reduction to Kakutani's fixed point theorem~\cite{Glicksberg52,Fan52}.
The set of all feasible flows $\tilde{\mathcal F}$ of the extended network is a closed, bounded (hence compact) polytope in $\mathrm{R}^{|\tilde{\mathcal{P}}|} $, and obviously convex. 
\begin{equation*}
    \tilde{\mathcal F }
       = \Bigl\{\,\tilde{f}\in \mathrm{R}_+^{|\tilde{\mathcal{P}}|} :  
             \sum_{\tilde{P}\in \tilde{\mathcal{P}}_i}\tilde{f}_{\tilde{P}}=r_i , \forall i \in[k] \Bigr\}.
\end{equation*}


We define the following set-valued mapping $B$ that maps a flow $\tilde{f}$ to the set of feasible flows that only use paths that have minimal perceived cost in $\tilde{f}$.

\begin{equation*}
    B(\tilde{f})\;=\;\prod_{i=1}^k B_i(\tilde{f}),
\end{equation*}
where
\begin{align*}
B_i(\tilde{f})&=B_i((\tilde{f_i},\tilde{f_{-i}})) 
   = 
   \Bigl\{\,\tilde{f}'_i\in\mathrm{R}_{\ge 0}^{|\tilde{\mathcal{P}}_i|}:
      \sum_{\tilde{P}\in \tilde{\mathcal{P}}_i}\tilde{f}'_{\tilde{P}}=r_i,\; \text{with} \;
     \\ &\tilde{f}'_{\tilde{P}}>0\;\Rightarrow\; 
      \tilde{\ell}_{\tilde{P}}(\tilde{f})\;=\;\min_{\tilde{Q} \in \tilde{\mathcal{P}}_i}\tilde{\ell}_{\tilde{Q}}(\tilde{f}) \Bigr\}.
\end{align*}

That defined, notice that if $\tilde{f}\in B(\tilde{f})$, that means for all $i$, all used ($s_i,t_i$)-paths are minimum cost paths. In other words, $\tilde{f}$ is an equilibrium.

We next check that $B\colon \tilde{\mathcal F}\rightarrow 2^{\tilde{\mathcal F}}$ is a Kakutani correspondence:
\begin{enumerate}
    \item Since $\mathcal{P}_i$ is finite, there is at least one path of minimal cost, $B_i(\tilde{f})\neq\emptyset$.
    \item If two different redistributions $\tilde{f}'_i,\,\tilde{f}''_i\in B_i(\tilde{f})$ assign positive mass only to minimal paths, then any convex combination $\alpha \tilde{f}'_i+(1-\alpha)\tilde{f}''_i$ also only puts positive mass on those same minimal paths, so lies in $B_i(\tilde{f})$.  Thus each $B_i(\tilde{f})$ is convex, and so is the product $B(\tilde{f})$.
    \item 
    Based on the continuity of $\tilde{\ell}_{\tilde{P}}(\tilde{f})$ as per eqs.\  \eqref{eq:LV} and \eqref{eq:LR},  for all paths $\tilde{P}$ in the extended network, we have for the perceived costs that $\tilde{\ell}_{\tilde{P}}(\tilde{f}^n)\to \tilde{\ell}_{\tilde{P}}(\tilde{f})$ whenever $\tilde{f}^n\to\tilde{f}$. So if  $ (\tilde{f}^n,\tilde{f}'^n) \to (\tilde{f},\tilde{f}')$ so that each $\tilde{f}'^n\in B(\tilde{f}^n)$, then by continuity of the cost function also $\tilde{f}'\in B(\tilde{f})$, since $\tilde{f}'$ can place positive mass only on paths whose cost equals the minimum cost at $\tilde{f}$. Hence $(\tilde{f},\tilde{f}')$ lies in the graph of $B$, which is therefore closed. 
\end{enumerate}

Because $\tilde{\mathcal F}$ is nonempty, compact and convex, and $B$ satisfies all the conditions of Kakutani’s fixed point theorem, there exists an extended flow $\tilde{f}\in \tilde{\mathcal F}$ with
    $\tilde{f}\in B(\tilde{f})$.
\end{proof}

\subsection{Discussion on equilibrium uniqueness}

In classical selfish routing games (with strictly increasing latency functions), the equilibrium is known to be unique in terms of the total flow on each edge. 
A natural question is whether such uniqueness still holds in our priority model with regular and priority users. The technical challenge here is that the cost function in terms of total flow value may not be continuous due to a discontinuity when the flow is at the threshold $2\omega_e/a_e$, as shown in Figure~\ref{fig:lf}. Nevertheless, we next show that 
the equilibrium remains unique in terms of the latency 
on each edge, even if the distribution between regular and priority strategies may not be. 
We will argue that the total edge flows still fulfill the well known variational inequality
that is known to characterize equilibria in classical network routing games~\cite{dafermos1980}. Because the cost functions in terms of total edge flows are not a function but a correspondence as per Observation~\ref{observation1}, we next give a proof that these variational inequalities still hold. For this, we construct a piecewise linear, single-valued function, to be able to connect the perceived costs in the extended model to the latencies. 

For an extended flow $\tilde{f}$ with a total flow $f^t$ we defines a a single-valued, non-decreasing function $g_e(\cdot)$.
For each edge $e\in E$, with $a_e=0$, we define $g_e(x)=b_e$. For each edge $e$ with $a_e > 0$, we let
\begin{equation*}
g_{e}(x)=\left\{
\begin{array}{lcl} 
\frac{1}{2}a_{e}x+b_{e}, & & {x <\frac{2\omega_{e}}{a_{e}};} \\
\xi_e, & & {x=\frac{2\omega_{e}}{a_{e}} \text{ and } f^t = \frac{2\omega_{e}}{a_{e}}; } \\
\frac{1}{2}a_{e}x+b_{e}+\omega_{e}, & & {x=\frac{2\omega_{e}}{a_{e}}  \text{ and }f^t   \ne \frac{2\omega_{e}}{a_{e}} ;} \\
\frac{1}{2}a_{e}x+b_{e}+\omega_{e}, & & {x >\frac{2\omega_{e}}{a_{e}} ;}
\end{array}\right.
\end{equation*}
where $\xi_e$ denotes the equilibrium edge cost at $e$ in $\tilde f$, i.e.,
\begin{equation*}
    \xi_e := \min\bigl\{ \tilde{\ell}_{e^V}(\tilde{f}_{e^V}),\;\tilde{\ell}_{e^R}(\tilde {f}_{e^R},\tilde{f}_{e^V}) \bigr\}.
\end{equation*}


\begin{lemma}[Variational inequality characterization]\label{lem:vi}
Let $\tilde f$ be an equilibrium of the extended instance $(G,r,\tilde\ell,\omega)$, and  $f^t$ its total flow. 
Let $g(f)=(g_e(f_e))_{e\in E}$. 
Then the following variational inequality holds for all feasible total flows $f$:
\begin{equation}\label{eq:vi}
\langle g(f^t),\, f - f^t \rangle \;\ge\; 0 
\end{equation}
\end{lemma}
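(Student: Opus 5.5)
The plan is to reduce the variational inequality to the standard path-based argument. The key step is to show that, for every edge $e$, the value $g_e(f^t_e)$ equals the cost that an equilibrium user actually perceives on $e$; once that holds, summing over paths turns \eqref{eq:vi} into the equilibrium condition of Definition~\ref{def:ns}.

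\textbf{Step 1: pointwise identification of the edge cost.} We show that at the equilibrium $\tilde f$ every edge $e$ with $f^t_e>0$ satisfies $g_e(f^t_e)=\xi_e$, and that every used extended lane of $e$ has perceived cost exactly $\xi_e$. Edges with $a_e=0$ are immediate. For them $\tilde\ell_{e^R}=b_e\le b_e+\omega_e=\tilde\ell_{e^V}$, so $\xi_e=b_e$. If $\omega_e>0$, priority is never used on such an edge, because switching that lane to regular on the same path strictly reduces the path cost. For $a_e>0$ we argue as in Example~\ref{ex:1}, using exchange deviations. Any user on lane $e^V$ (resp.\ $e^R$) could switch to the other lane of the same edge while keeping the rest of the extended path fixed. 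Only the cost on $e$ changes, so by Definition~\ref{def:ns} a used lane must have cost no larger than the other lane evaluated at the current $\tilde f$.
\begin{itemize}
\item If both lanes are used, their costs are equal. Equating \eqref{eq:LV} and \eqref{eq:LR} gives $\tfrac12 a_e f^t_e=\omega_e$, which is the case $f^t_e=2\omega_e/a_e$, and $g_e$ is defined there as $\xi_e$.
\item If only $e^R$ is used, then $\tfrac12 a_e f^t_e+b_e\le b_e+\omega_e$, so $f^t_e\le 2\omega_e/a_e$. In the strict case $g_e=\tfrac12 a_ef^t_e+b_e=\xi_e$; at equality $g_e$ is again $\xi_e$ by definition.
\item If only $e^V$ is used, then $\tfrac12 a_ef^t_e+b_e+\omega_e\le a_ef^t_e+b_e$, so $f^t_e\ge 2\omega_e/a_e$. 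The value of $g_e$ is then the priority formula, or $\xi_e$ at equality, and in either case it equals the used lane's cost, i.e.\ $\xi_e$.
\end{itemize}
Hence every used extended path $\tilde P$ of commodity $i$ with underlying path $P$ has $\tilde\ell_{\tilde P}(\tilde f)=\sum_{e\in P}g_e(f^t_e)$.

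\textbf{Step 2: lower bound on every path.} For any path $Q\in\mathcal P_i$, choose on each edge the lane of smaller perceived cost. This extended path has cost $\sum_{e\in Q}\xi_e$. On edges with $f^t_e>0$ this equals $\sum g_e(f^t_e)$ by Step 1.

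The main obstacle is edges with $f^t_e=0$. There $\xi_e=\min\{b_e+\omega_e,b_e\}=b_e$, and $g_e(0)$ is also $b_e$ whenever $0<2\omega_e/a_e$. The degenerate case $\omega_e=0$ with $a_e>0$ needs care: then $x=0$ equals the threshold and $f^t_e$ equals it too, so $g_e(0)=\xi_e=b_e$ and the case still works. With this settled we get $\sum_{e\in Q}g_e(f^t_e)\ge \min_{\tilde Q}\tilde\ell_{\tilde Q}(\tilde f)=:d_i$, with equality for every used path.

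\textbf{Step 3: conclusion.} Write $f$ and $f^t$ as path flows, taking $f^t_P$ to be the aggregate of the extended path flows over lanes. Then
\[\langle g(f^t),f-f^t\rangle=\sum_i\sum_{P\in\mathcal P_i}(f_P-f^t_P)\sum_{e\in P}g_e(f^t_e).\]
The terms with $f^t_P>0$ contribute $d_i f^t_P$ in total, and the $f$-terms contribute at least $d_i r_i$. Since both flows route demand $r_i$, the difference is at least $\sum_i d_i(r_i-r_i)=0$, which proves \eqref{eq:vi}.
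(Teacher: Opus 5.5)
Your proof is correct and follows the same route as the paper. Both arguments identify $g_e(f^t_e)$ with the equilibrium edge cost $\xi_e$, show that every used path attains the shortest-path value $\lambda_i$ (your $d_i$), and conclude by path decomposition. Your Step~1 case analysis uses single-lane exchange deviations and also covers the zero-flow and $\omega_e=0$ edges. It thereby proves that $g_e(f^t_e)=\xi_e$ and that every used lane attains $\xi_e$, which the paper only asserts.
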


\begin{proof}
For each commodity $i$, let
\begin{equation*}
    \lambda_i := \min_{P\in\mathcal{P}_i} \sum_{e\in P}\xi_e
\end{equation*}
be the shortest-path cost at equilibrium. Because $\tilde{f}$ is an equilibrium, and since $g_e(f^t_e) =\min\bigl\{ \tilde{\ell}_{e^V}(\tilde{f}_{e^V}),\;\tilde{\ell}_{e^R}(\tilde {f}_{e^R},\tilde{f}_{e^V}) \bigr\}$, the path flows of $f^t$ satisfy 
\begin{equation*}
\sum_{e\in P}\xi_e = \lambda_i \quad \text{if } f^t_P>0, 
\qquad 
\sum_{e\in P}\xi_e \ge \lambda_i \quad \text{if } f^t_P=0.
\end{equation*}
Now observe that
\begin{equation*}
\begin{aligned}
\langle g(f^t), f-f^t\rangle
&= \sum_{e\in E} g_e(f^t_e)\,(f_e-f^t_e) \\
&= \sum_{i}\sum_{P\in\mathcal{P}_i} \Big(\sum_{e\in P} g_e(f^t_e)\Big)\,(f_P-f^t_P) \\
&= \sum_{i}\sum_{P\in\mathcal{P}_i} \Big(\sum_{e\in P} \xi_e\Big)\,(f_P-f^t_P) \\
&\ge \sum_{i}\sum_{P\in\mathcal{P}_i} \lambda_i\,(f_P-f^t_P) \\
&= \sum_{i} \lambda_i \Big(\sum_{P\in\mathcal{P}_i} f_P - \sum_{P\in\mathcal{P}_i} f^t_P\Big) \\
&= \sum_i \lambda_i (r_i-r_i) \;=\;0,
\end{aligned}
\end{equation*}
where the third equality uses the definition $g_e(f^t_e)=\xi_e$. Hence inequality \eqref{eq:vi} holds.
\end{proof}

\begin{theorem}\label{theo:leq}
    If $\tilde{f}$ and $\tilde{f}'$ are  equilibrium flows for $(G,r,\tilde{\ell},\omega)$, then
    \begin{itemize}
        \item[(a)] $\ell_e(f_e^t)=\ell_e({f_e^t}')$ for all $e \in E$ and 
        \item[(b)] $C(f^t) = C({f^t}')$.
    \end{itemize}
\end{theorem}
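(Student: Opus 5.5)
The plan is to run the classical monotonicity argument on the variational inequality of Lemma~\ref{lem:vi}, applied once to each equilibrium. Write $g$ for the function built from $\tilde f$ and $f^t$, and $g'$ for the one built from $\tilde f'$ and ${f^t}'$ (they may differ only in the values $\xi_e$ at threshold points). Lemma~\ref{lem:vi} for $\tilde f$, tested at the feasible flow ${f^t}'$, gives $\langle g(f^t), {f^t}'-f^t\rangle \ge 0$. Lemma~\ref{lem:vi} for $\tilde f'$, tested at $f^t$, gives $\langle g'({f^t}'), f^t-{f^t}'\rangle \ge 0$. Adding the two yields
\begin{equation*}
\sum_{e\in E}\bigl(g_e(f^t_e)-g'_e({f^t_e}')\bigr)\bigl({f^t_e}'-f^t_e\bigr)\;\ge\;0 .
\end{equation*}

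The key step is a per-edge monotonicity claim: every summand above is $\le 0$. For $a_e=0$ this is trivial, since $g_e=g'_e\equiv b_e$. For $a_e>0$, set $\theta_e=2\omega_e/a_e$. Both $g_e$ and $g'_e$ are selections of the same monotone correspondence, namely the graph in Figure~\ref{fig:lf}. Below $\theta_e$ the value is $\frac12 a_e x+b_e<\omega_e+b_e$. At $\theta_e$ the value lies in $[\omega_e+b_e,2\omega_e+b_e]$: this follows from Observation~\ref{observation1} when the selection uses $\xi_e$, and otherwise the value is $2\omega_e+b_e$. Above $\theta_e$ the value is $\frac12 a_e x+b_e+\omega_e>2\omega_e+b_e$. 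Hence $x<y$ implies $g_e(x)<g'_e(y)$ strictly, and symmetrically. This is the step that needs the most care, because $g$ and $g'$ are different functions that coincide only off the threshold.

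It follows that each summand is $\le 0$ while their sum is $\ge 0$, so every summand vanishes. By the strict monotonicity just shown, a summand on an edge with $a_e>0$ can vanish only if $f^t_e={f^t_e}'$. Therefore total flows agree on all edges with $a_e>0$. On edges with $a_e=0$ the latency $\ell_e\equiv b_e$ is constant. This proves (a), since $\ell_e$ is the original average latency and depends only on the total flow. It does not matter that the perceived costs $\xi_e$ may differ at threshold edges.

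For (b), the difficulty is that on edges with $a_e=0$ the total flows may differ, so equal latencies alone do not give equal cost. I would use that the two individual variational inequalities are nonnegative and that their sum equals the (zero) sum of the summands, so each is an equality. In particular $\sum_e g_e(f^t_e)({f^t_e}'-f^t_e)=0$. Only edges with $a_e=0$ contribute to this sum, and there $g_e=b_e$, so $\sum_{e:a_e=0} b_e({f^t_e}'-f^t_e)=0$. Since $C(f^t)-C({f^t}')$ reduces to exactly $\sum_{e:a_e=0} b_e(f^t_e-{f^t_e}')$, we get $C(f^t)=C({f^t}')$.
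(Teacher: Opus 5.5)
Your proposal is correct and follows the paper's route. You sum the two variational inequalities from Lemma~\ref{lem:vi}, use that $g_e$ and $g'_e$ are selections of the same strictly monotone threshold correspondence to force equal total flows on edges with $a_e>0$, and reduce (b) to the constant-latency edges via the variational inequality. Your per-edge step (every summand is $\le 0$ while the sum is $\ge 0$, so all summands vanish) is the rigorous form of the paper's argument, which reads \eqref{eq:VIe0} as if it held edge by edge. Likewise, your remark that both inequalities hold with equality settles both signs in (b) at once, whereas the paper treats only one sign explicitly.
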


\begin{proof}
Let $\tilde{f}$ and $\tilde{f}'$ be two arbitrary equilibria of $(G,r,\tilde{\ell},\omega)$ and let $f^t$ and ${f^t}'$ be their total flow, i.e., $f_e^t=\tilde{f}_{e^R}+\tilde{f}_{e^V}$, ${f_e^t}'=\tilde{f}_{e^R}'+\tilde{f}_{e^V}'$. 
By Lemma~\ref{lem:vi}, both total flows $f^t$ and ${f^t}'$ satisfy the VI. Now for each edge $e \in E$ consider the two single-valued, non-decreasing functions $g_e(\cdot)$ and $g_e'(\cdot)$ as defined above. Note that they could differ only at the critical point $x=2\omega_e/a_e$. By the VI we have 
\begin{equation}\label{eq:VI-1}
\langle g(f^t),\, f - f^t\rangle \ge 0,
\quad \text{for all feasible } f.
\end{equation}
\begin{equation}\label{eq:VI-2}
\langle g'({f^t}'),\, f - {f^t}'\rangle \ge 0,
\quad \text{for all feasible } f.
\end{equation}


Apply Eq. \eqref{eq:VI-1} with $f={f^t}'$ and Eq. \eqref{eq:VI-2} with $f=f^t$:
\begin{equation*}
\langle g(f^t),\, {f^t}' - f^t\rangle \ge 0,
\qquad
\langle g'({f^t}'),\, f^t - {f^t}'\rangle \ge 0.
\end{equation*}
Summing yields
\begin{equation*}
    \sum_{e} [g_e(f_e^t)({f_e^t}'-f_e^t)+g_e'({f_e^t}')(f_e^t-{f_e^t}')] \ge 0,
\end{equation*}
that is
\begin{equation}\label{eq:VIe0}
    \sum_{e}(g_e(f_e^t)-g_e'({f_e^t}'))({f_e^t}'-f_e^t) \ge 0.
\end{equation}

Now consider any edge $e\in E$. When $a_e=0$, we have $g_e(f_e^t)=g_e'({f_e^t}')$ and hence also $\ell_e(f_e^t)=\ell_e({f_e^t}')$. 
When $a_e>0$, note that $g_e(x)$ and $g_e'(x)$ are strictly increasing functions, and they could differ only at the critical point ${2\omega_e}/{a_e}$. Using \eqref{eq:VIe0}, and strict monotonicity of $g_e(\,\cdot\,)$ and $g_e'(\,\cdot\,)$ we claim  that $f_e^t=f_e^{t'}$.

To see why, assume w.l.o.g.\ that $g_e \le g'_e$.
Then $f_e^t<f_e^{t'}$ implies $g_e(f_e^t)\ge g_e'({f_e^t}')$ via \eqref{eq:VIe0}, but strict monotonicity of $g_e(\,\cdot\,)$ demands
$g_e(f_e^t)<g_e(f_e^{t'})\le g_e'(f_e^{t'})$, a contradiction.
On the other hand for $f_e^t>f_e^{t'}$, \eqref{eq:VIe0} implies $g_e(f_e^t)\le g_e'(f_e^{t'})< g_e'(f_e^{t})$, where the last inequality uses strict monotonicity of $g_e'(\,\cdot\,)$. Therefore $g_e(f_e^t)\neq g_e'(f_e^t)$, which implies  $f_e^t=\frac{2\omega_e}{a_e}$, and since $f_e^{t'}<f_e^{t}$, both functions are identical at $f_e^{t'}$, and we 
get the contradiction $g_e'(f_e^{t'})=g_e(f_e^{t'})<g_e(f_e^t)\le g_e'(f_e^{t'})$.
So indeed ${f_e^t}'= f_e^t$ whenever $a_e >0$,
which implies $\ell_e(f_e^t)=\ell_e({f_e^t}')$.





The proof of part (b) uses standard arguments employing the variational inequality from Lemma \ref{lem:vi}:
Let $E_0:=\{e \in E \mid a_e=0 \}$ and $E_+:= E\setminus E_0$. From the arguments above, we have $f_e^t={f_e^t}'$ for $e \in E_+$. Therefore, $$C(f^t)-C({f'}^t) = \sum_{e \in E_0} b_e f^t_e-  \sum_{e \in E_0} b_e  {f_e^t}' .$$ Assume for the sake of contradiction that $\sum_{e \in E_0} b_e f^t_e - \sum_{e \in E_0} b_e {f_e^t}'   > 0$. Then
\begin{align*}
\langle g(f^t), {f^t}'-f^t\rangle
&= \sum_{e\in E} g_e(f^t_e)\,({f_e^t}'-f^t_e) \\
&= \sum_{e\in E_0} g_e(f^t_e)\,({f_e^t}'-f^t_e) \\
&= \sum_{e\in E_0} b_e\,({f_e^t}'-f^t_e) < 0, 
\end{align*}
which contradicts the Lemma \ref{lem:vi} which completes the proof of part~(b).
\end{proof}

\begin{corollary}\label{theo:unique}
     For an instance ($G, r, \tilde{\ell},\omega$) with strictly increasing linear cost functions every equilibrium $\tilde{f}$ has the same total flow $f^t$, where $f_e^t=\tilde{f}_{e^R}+\tilde{f}_{e^V}, \forall e \in E$.  
\end{corollary}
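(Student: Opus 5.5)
This corollary follows directly from the edge-by-edge argument in the proof of Theorem~\ref{theo:leq}(a). "Strictly increasing linear cost functions" means $a_e>0$ for every $e\in E$, so $E_0=\emptyset$ and $E_+=E$. I would take two arbitrary equilibria $\tilde f$ and $\tilde f'$ of $(G,r,\tilde\ell,\omega)$, which exist by Theorem~\ref{theo:1}, with total flows $f^t$ and ${f^t}'$. The goal is to show $f^t_e={f^t_e}'$ for every edge $e$.

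First, I apply Lemma~\ref{lem:vi} to both equilibria, using the associated single-valued functions $g$ and $g'$. Testing each variational inequality against the other total flow and adding the two gives inequality \eqref{eq:VIe0}:
\[
\sum_{e}\bigl(g_e(f^t_e)-g'_e({f^t_e}')\bigr)\bigl({f^t_e}'-f^t_e\bigr)\ge 0 .
\]

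Second, I show that every summand is $\le 0$, with equality only when $f^t_e={f^t_e}'$. For each $e$ with $a_e>0$, both $g_e$ and $g'_e$ are strictly increasing. They agree everywhere except possibly at the threshold $2\omega_e/a_e$, where each takes a value in the jump interval $[\omega_e+b_e,2\omega_e+b_e]$ given by Observation~\ref{observation1}. Suppose $f^t_e<{f^t_e}'$. Then $f^t_e<2\omega_e/a_e$ or ${f^t_e}'>2\omega_e/a_e$ (or both), so at least one of the two points lies off the threshold. Every value of $g_e$ at a point $x$ lies strictly below every value of $g'_e$ at a point $y>x$: off the threshold the two functions coincide with the same strictly increasing piecewise-linear branches, and at the threshold both values lie inside the common jump interval. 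Hence $g_e(f^t_e)<g'_e({f^t_e}')$, and the summand is strictly negative. The case $f^t_e>{f^t_e}'$ is symmetric.

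Third, a sum of nonpositive terms that is $\ge 0$ must have every term equal to zero. By the strictness just shown, this forces $f^t_e={f^t_e}'$ for all $e$.

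The main obstacle is the strictness claim in the second step when one of the points sits exactly at the discontinuity $2\omega_e/a_e$. This requires checking that the left limit $\omega_e+b_e$ and the right limit $2\omega_e+b_e$ of the two branches bracket every possible value of $\xi_e$. Observation~\ref{observation1} supplies exactly this.

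Alternatively, I would simply invoke the case analysis already carried out in the proof of Theorem~\ref{theo:leq}(a). That proof establishes $f^t_e={f^t_e}'$ for every $e\in E_+$, and here $E_+=E$, which immediately yields the corollary.
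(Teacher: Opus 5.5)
Your proposal is correct and is essentially the paper's argument: the corollary is Theorem~\ref{theo:leq}(a) with $E_+=E$. It rests on Lemma~\ref{lem:vi} applied to both equilibria, the summed inequality \eqref{eq:VIe0}, and strict monotonicity of $g_e,g'_e$ across the jump interval from Observation~\ref{observation1}. Your explicit step showing that every summand of \eqref{eq:VIe0} is nonpositive, and strictly negative unless $f^t_e={f^t_e}'$, is the cleaner way to get from the summed inequality to individual edges; the paper's case analysis applies \eqref{eq:VIe0} edge by edge without this justification.
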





With the equilibrium structure fully characterized, we now turn to the question of optimal pricing and its welfare implications.

\section{Optimal pricing yields socially optimal outcomes }

While equilibrium existence and uniqueness are structural properties of the model, a central question in mechanism design is how to guide the system towards socially optimal outcomes. In classical congestion games, taxes based on marginal cost pricing are known to induce optimal flows, e.g.,  \cite{smith1979b,cole2006much}. We show that a similar principle applies in our model: if each edge charges a priority fee equal to its marginal externality cost, then the resulting equilibrium reproduces the socially optimal flow. This generalizes the marginal cost pricing principle to our priority model, and is the main result of this paper, namely that \emph{voluntary} upgrade options --under the right pricing-- can fully internalize congestion externalities.

\begin{theorem}
 Let $(G,r,\ell)$ be an instance with linear and non\--decreasing latency functions and let $f^*$ be the minimum-cost flow. 
    If $\omega_e=f_e^*\cdot \ell_e'(f_e^*)$ for all $e \in E$ then any equilibrium flow $\tilde{f}$ of the extended instance $(G,r,\tilde{\ell}, \omega)$ realizes optimal total flows, as $\tilde{f}_{e^R}+\tilde{f}_{e^V}=f_e^*\  (\forall e \in E)$, hence the price of anarchy PoA $=1$.
\end{theorem}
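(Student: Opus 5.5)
The plan is to exhibit one explicit equilibrium of the extended instance whose total flow is $f^*$, and then transfer optimality to every equilibrium using the uniqueness result, Theorem~\ref{theo:leq} (Corollary~\ref{theo:unique}).

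First I would unpack the pricing rule. In this model $\ell_e(x)=\frac12 a_e x+b_e$, so $\ell_e'(x)=\frac12 a_e$. Hence $\omega_e=\frac12 a_e f^*_e$, and for $a_e>0$ the critical threshold of Example~\ref{ex:1} is exactly $2\omega_e/a_e=f^*_e$. So marginal-cost pricing puts every edge precisely at its indifference point when it carries its optimal load. Next I would write the optimality conditions for $f^*$. The social cost is $C(f)=\sum_e(\frac12 a_e f_e^2+b_e f_e)$, which is convex, so $f^*$ is optimal iff for each commodity $i$ every path $P$ with $f^*_P>0$ minimizes $\sum_{e\in P}c_e$ over $\mathcal P_i$, where
\[
c_e:=a_e f^*_e+b_e
\]
is the marginal cost. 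This is the standard KKT/first-order condition for convex flow minimization.

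The key construction is the extended flow $\tilde f^*$ that routes each path flow $f^*_P$ entirely through the priority lanes $e^V$ of the edges of $P$. Then $\tilde f^*_{e^V}=f^*_e$ and $\tilde f^*_{e^R}=0$. By \eqref{eq:LV},
\[
\tilde\ell_{e^V}(f^*_e)=\tfrac12 a_e f^*_e+b_e+\tfrac12 a_e f^*_e=c_e .
\]
By the convention for zero regular flow, $\tilde\ell_{e^R}(0,f^*_e)=\hat\ell_e(f^*_e)=c_e$ as well. This is the upper end of the interval in Observation~\ref{observation1}. So every edge presents the cost $c_e$ on both lanes, and the perceived cost of any extended path is $\sum_{e\in P}c_e$ for its underlying path $P$. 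The equilibrium condition of Definition~\ref{def:ns} therefore reduces exactly to the KKT condition above, and $\tilde f^*$ is an equilibrium with total flow $f^*$. I would check the degenerate cases separately: when $f^*_e=0$ we get $\omega_e=0$ and both lanes cost $b_e$; when $a_e=0$ both lanes cost $b_e$.

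Finally, let $\tilde f$ be any equilibrium. By Theorem~\ref{theo:leq}(a) (in the form proved there), its total flow coincides with $f^*$ on all edges with $a_e>0$. By Theorem~\ref{theo:leq}(b), $C(f^t)=C(f^*)$, so PoA $=1$. The main obstacle is the edges with $a_e=0$. There the total flow need not literally equal the particular $f^*$ we started from. It is, however, another minimum-cost flow, since its cost equals $C(f^*)$. I would state this explicitly: either assume strictly increasing latencies, so that Corollary~\ref{theo:unique} gives $f^t=f^*$ on every edge, or read "$f^*$" as "a minimum-cost flow", in which case $f^t$ itself is one and the claimed identity holds with $f^*$ replaced by it. One should also confirm that the fees $\omega_e$ defined from $f^*$ remain valid for that alternative optimum. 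This holds because $\omega_e=0$ on edges with $a_e=0$, and the flows agree on all other edges.
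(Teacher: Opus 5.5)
Your proposal is correct and follows the paper's proof essentially step for step. Like the paper, you route $f^*$ entirely through the priority lanes, observe that both lanes of each edge then cost $a_e f_e^*+b_e$, invoke the first-order optimality conditions of $f^*$ to certify an equilibrium, and transfer the conclusion to all equilibria via Theorem~\ref{theo:leq}. Your caveat about edges with $a_e=0$ is well taken: there the paper's own argument also yields only equal total cost, not literal equality of the total flow with the chosen $f^*$.
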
\label{prop:opt-ns}

\begin{proof}
Recall that the total cost of a given flow $f$ is $C(f)=\sum_{e\in E}f_e\ell_e(f_e)$ $=$ $\sum_{e\in E}f_e(\frac{1}{2}a_ef_e+b_e)$.
Therefore, to enforce socially optimal outcomes, we set the priority prices on each edge $e$ to $\omega_e=f_e^*\cdot \ell_e'(f_e^*)=\frac{1}{2}a_ef_e^*$. We define an extended flow $\tilde{f}$ by assigning the total flow of $f^*$ to the priority option, i.e., for each $e\in E$ we let $\tilde{f}_{e^V} = f^*$ and $\tilde{f}_{e^R}=0$.
We show that $\tilde{f}$ is an equilibrium flow. 



First, by choice of $\tilde{f}$ and prices $\omega$, we note that users are indifferent  between choosing the regular or priority lane, because using \eqref{eq:LV} and \eqref{eq:LR}, for each edge~$e\in~E$ we have  
\begin{align}\nonumber
    \tilde{\ell}_{e^V}(\tilde{f}_{e^V}) &= \frac12a_e\tilde{f}_{e^V}+b_e+\omega_e\\\nonumber
    &=\frac12a_ef_{e}^*+b_e+\frac12 a_ef_e^*\\\nonumber
    &=a_ef_e^*+b_e\\\label{eq:R_V_equal}
    &=a_e\tilde{f}_{e^V}+b_e\ =\ \tilde{\ell}_{e^R}(\tilde{f}_{e^R},\tilde{f}_{e^V})
\end{align}
Now assume for the sake of contradiction that $\tilde{f}$ is not an equilibrium. 
Then there need to exist two different paths $ P, Q \in \mathcal{P}_i$ with $\tilde{f}_{\tilde{P}}>0$, $\tilde{\ell}_{\tilde{P}}(\tilde{f}) > \tilde{\ell}_{\tilde{Q}}(\tilde{f})$ where $\tilde{P}$ and $\tilde{Q}$ are the paths corresponding to $P$ and $Q$, respectively. Now due to \eqref{eq:R_V_equal}, we may assume that all users choose the priority option on all edges of $\tilde{P}$ and $\tilde{Q}$, and therefore
\begin{equation}\label{eq:contradiction}
    \sum_{e\in P}\tilde{\ell}_{e^V}(\tilde{f}_{e^V}) > \sum_{e\in Q}\tilde{\ell}_{e^V}(\tilde{f}_{e^V})\,.
\end{equation}

However since $f^*$ is an optimal flow, optimality conditions for $f^*$ demand that rerouting an infinitesimal fraction of flow from $f_{P}^*$ to $f_{Q}^*$ cannot decrease total cost. Therefore, 
\begin{equation*}
    \sum_{e\in P}(\ell_e(f_e^*)+f_e^*\cdot \ell_e'(f_e^*)) \le \sum_{e\in Q}(\ell_e(f_e^*)+f_e^*\cdot \ell_e'(f_e^*)).
\end{equation*}
Recalling  $\omega_e=f_e^*\cdot \ell_e'(f_e^*)=\frac{1}{2}a_ef_e^*$, this can be easily seen to be equivalent to
\begin{equation*}
    \sum_{e\in P}\tilde{\ell}_{e^V}(\tilde{f}_{e^V}) \le \sum_{e\in Q}\tilde{\ell}_{e^V}(\tilde{f}_{e^V})\,,
\end{equation*}
contradicting \eqref{eq:contradiction}.
Therefore, the extended flow $\tilde{f}$ is an equilibrium flow of $(G,r,\tilde{\ell},\omega)$. Finally, by Theorem \ref{theo:leq}, \emph{every} equilibrium flow of extended instance $(G,r,\tilde{\ell},\omega)$ must have the same total cost (in terms of $(G,r,\ell)$) with the optimal flow $f^*$.
\end{proof}

\section{The effects of uniform pricing}

In the previous section, we observed that when priority lanes are priced at marginal cost, the equilibrium flow in the extended instance aligns precisely with the optimal flow. However implementing heterogeneous pricing schemes is potentially challenging in practice \cite{cole2003pricing}. 
In light of this, we here restrict our attention to uniform pricing schemes in which $\omega_e = \omega_{e'}$ for all $e,e' \in E$.

\begin{theorem}
    There exist instances $(G,r,\tilde{\ell},\omega)$ with linear cost functions in which optimal uniform pricing with price $\omega$ has a PoA of $4/3 - \epsilon$.
\end{theorem}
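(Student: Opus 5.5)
The plan is to build a multi-commodity instance made of many disjoint Pigou-type gadgets. Each gadget has the same optimal cost, but the marginal-cost prices of the gadgets are geometrically separated. A single uniform price $\omega$ can then be "right" for only a bounded number of gadgets, and every other gadget behaves essentially like an unpriced Pigou network, which has ratio $4/3$.

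\textbf{Step 1 (single gadget analysis).} Recall that in this model $\ell_e(x)=\tfrac12 a_e x+b_e$. Consider a gadget with demand $r$, a variable edge with $\hat\ell(x)=ax$, and a constant edge with $\hat\ell\equiv b$, where $a=2b/r$. The constant edge has $a_e=0$, so its priority lane is never strictly better than its regular lane and only the regular lane is used there.

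\emph{Cost without priority.} The unpriced equilibrium puts all flow on the variable edge at cost $br$. The optimum splits the flow equally and costs $\tfrac34 br$.

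\emph{Cost with a price $\omega$.} Using Example~\ref{ex:1} and Observation~\ref{observation1} on the variable edge, I will compute the equilibrium total flow $x(\omega)$ on the variable edge; by Theorem~\ref{theo:leq} the resulting cost is well defined. The aim is to show two limiting regimes:
\begin{itemize}
\item If $\omega\ge ar/2=b$, the threshold $2\omega/a\ge r$ is never reached. Everyone uses the regular lane, exactly as in the unpriced game, and the cost is $br$.
\item If $\omega\le \delta b$, all users on the variable edge are priority users, and the perceived cost there is $\tfrac12 a x+\omega$. This is the unpriced Pigou game perturbed by an additive $\omega$. In equilibrium $x\ge r(1-\delta)$, so the social cost $C(f^t)=\sum_e \ell_e(f^t_e)f^t_e$ is at least $(1-O(\delta))br$.
\end{itemize}
So the gadget's ratio is at least $(4/3)(1-O(\delta))$ unless $\omega\in(\delta b,b)$.

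\textbf{Step 2 (the family).} Take gadgets $j=0,\dots,n$ as separate commodities on disjoint edges. Gadget $j$ has $b_j=M^j$ and $r_j=M^{-j}$, so $a_j=2M^{2j}$. Every gadget then has optimal cost $\tfrac34$ and unpriced cost $1$, and its marginal-cost price $f^*\ell'(f^*)$ equals $M^j/2$. Fix any uniform $\omega$. Choose $M\ge 1/\delta$; then the intervals $(\delta M^j,M^j)$ overlap in at most two indices $j$, so at most two gadgets fall outside both regimes of Step~1. Those two gadgets have cost at least their optimum $\tfrac34$. Hence
\[
\frac{C(\text{eq})}{C(\text{opt})}\;\ge\;\frac{(n-1)(1-O(\delta))+2\cdot\tfrac34}{\tfrac34(n+1)},
\]
which tends to $4/3$ as $n\to\infty$ and $\delta\to0$. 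Since this bound holds for every $\omega$, it holds in particular for the optimal uniform price, and we can pick $n,\delta,M$ so that the PoA exceeds $4/3-\epsilon$.

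\textbf{Main obstacle.} The delicate step is the small-$\omega$ regime of Step~1. I have to check carefully that with every user in the priority lane the equilibrium is still nearly "all on the variable edge", with an error controlled by $\omega/b$. I also have to handle the threshold case $x=2\omega/a$ via Observation~\ref{observation1}. I would first write the equilibrium condition $\tfrac12 a x+\omega=b$ when $x<r$, which gives $x=r(1-\omega/b)$, and then bound the social cost directly. If gadgets on separate commodities are considered too artificial, a fallback is to chain the gadgets in series for a single commodity; I expect the same estimates to go through there.
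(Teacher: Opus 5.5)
Your proof is correct, and it reaches the bound by a different route from the paper.

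\textbf{The paper's construction.} The paper uses a single commodity: a constant path $Q$ of cost $1$ in parallel with paths $P_0,\ldots,P_k$. Path $P_i$ is subdivided into $2^i$ edges of slope $2^{-(i-1)}$. As a result every $P_i$ has effective latency $x$ and optimal flow $\tfrac12$. A uniform per-edge price $\omega$, however, acts on $P_i$ as a path toll $2^i\omega$. So the per-edge marginal-cost prices $2^{-(i+1)}$ are spread geometrically with the fixed ratio $2$. The paper computes an equilibrium explicitly for every $\omega$; the demand $k+1$ forces every used path to have perceived cost exactly $1$. It then bounds the total loss on the cheaply priced paths by a geometric sum, $\omega(2^{i^*+1}-1)\le 1$, which gives $c(\omega)\ge k-2$.

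\textbf{Your construction.} You spread the thresholds by rescaling $(a_j,b_j,r_j)$ of disjoint Pigou gadgets placed on separate commodities. The model allows $k$ commodities, so this is legitimate. It decouples the gadgets completely, each gadget has a closed-form equilibrium, and you can bound the intermediate gadget crudely by its optimum. The cost of this is that the separation factor $M\ge 1/\delta$ must grow with the accuracy. The paper's construction, in turn, gives a single source--sink instance, which is the stronger example.

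\textbf{Two remarks.} First, a minor point: with $M\ge 1/\delta$ the intervals $(\delta M^j,M^j)\subseteq(M^{j-1},M^j)$ are pairwise disjoint, so at most one gadget is intermediate, not two. This is harmless.

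Second, your series fallback does not work as stated. In series every gadget carries the same demand $r$. Spreading the thresholds $b_j/2$ therefore also spreads the gadget costs $b_j r$. Pricing the dominant gadget correctly then drives the ratio down to roughly $1+1/(3M)$ as $n\to\infty$, well below $4/3$. The paper's subdivision of $P_i$ into $2^i$ edges is exactly what keeps the path costs equal while spreading the per-edge thresholds within one commodity.
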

\begin{proof}

    

\begin{figure}[t]
    \centering
    \includegraphics[width=0.8\linewidth]{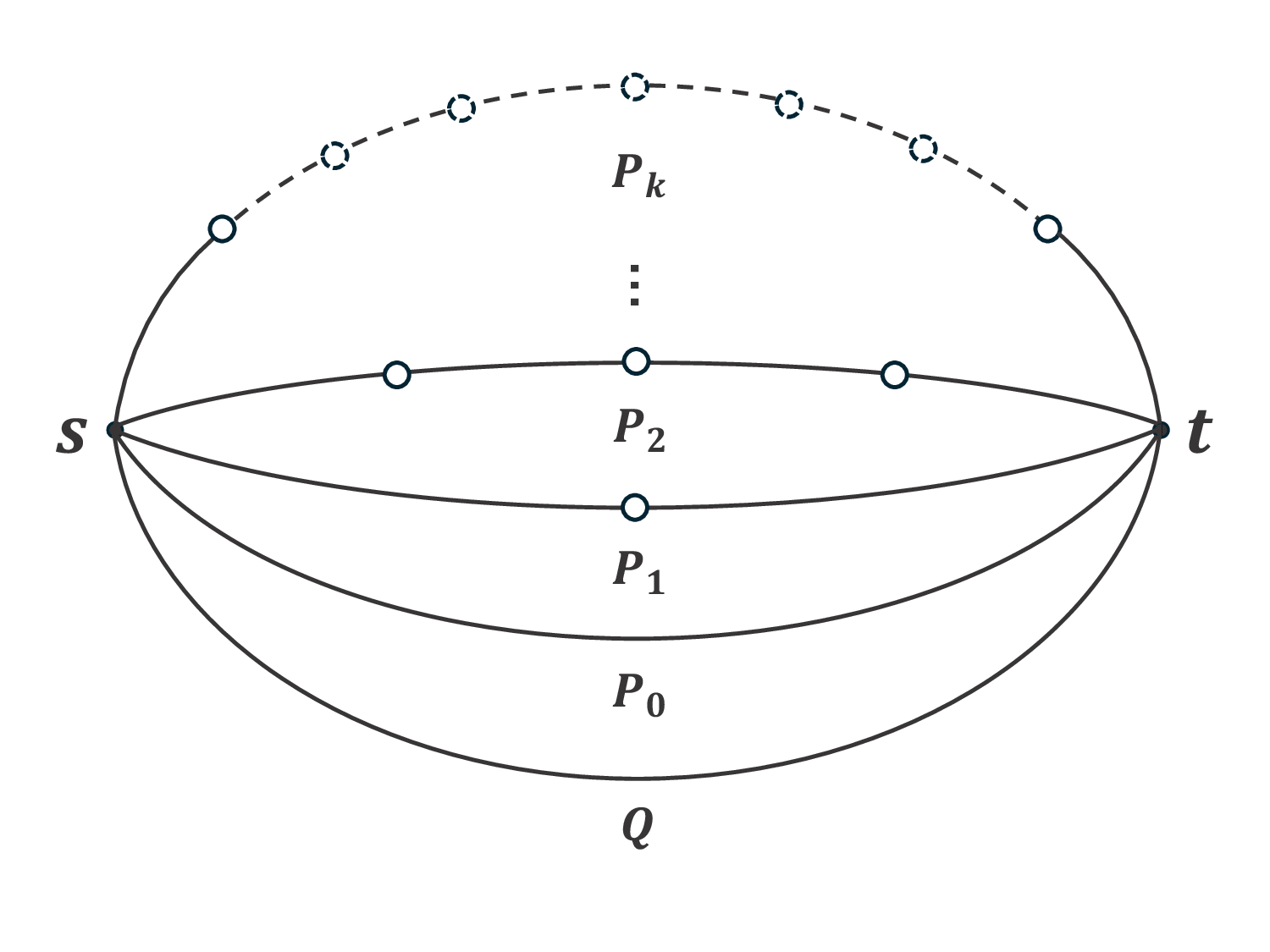}
    \caption{The network for instance  $(G_k,r_k,\tilde{\ell}_k,\omega_k)$  illustrating the structural inefficiency of uniform pricing.}
    \label{fig:4}
\end{figure}

Consider a family of instances parametrized by $k \in \mathbb{N}$ as follows.
The total demand is $k+1$. There are $k+2$ parallel $s$-$t$-paths $Q,P_0,\ldots,P_{k}$. 
The path $Q$ has constant cost of $1$.
Each path $P_i$ consists of a sequence $2^i$ edges where each edge $e \in P_i$ has the cost function  $\hat{\ell}_e(x) = \frac{1}{2^{i-1}} x$. Thus, for each $e \in P_i$ we have that  $\tilde{\ell}_{e^V}(\tilde{f}_{e^V})= \frac{1}{2^{i}} \tilde{f}_{e^V}{+\omega}$ and  $\tilde{\ell}_{e^R}(\tilde{f}_{e^R},\tilde{f}_{e^V}) = \frac{1}{2^{i-1}} \tilde{f}_{e^V} + \frac{1}{2^{i}} \tilde{f}_{e^R}
$.

We will show as $k \rightarrow \infty$ that the $PoA\rightarrow \frac{4}{3}$ for any choice of a uniform price of $\omega$. To that end, we first determine the cost of an optimal flow which routes a flow of $\frac12$ along each path $P_i$ inducing a cost of $\frac{1}{4}$ on each path $P_i$. By routing the remaining demand of $\frac{k+1}{2}$ via $Q$ results in a total cost of $\frac{k+1}{2} \cdot  \frac14 + \frac{k+1}{2} \cdot 1 = (k+1) \frac34 $.

We now consider an equilibrium for a uniform price of $\omega$. We first note, that  in equilibrium the  perceived cost  of any used path  is at most $1$ as path $Q$  has a constant cost of $1$. Furthermore, there is no path in the extended network with perceived cost smaller $1$:
Either (i) the entire demand of $k+1$ is routed over the paths $P_0,\ldots,P_k$ and, therefore, there is a path with flow of at least $1$ and latency (and hence perceived cost) of at least $1$, or (ii) there is flow on  path $Q$ with latency $1$. Hence in equilibrium all used paths must have perceived cost of exactly $1$.
The rest of the analysis proceeds by showing one equilibrium which  yields a perceived cost equal to 1 for all users;
making use of Observation \ref{observation1}. 
\begin{enumerate}
 \item On paths $P_i$ where $\omega < \frac{1}{2^{i+1}}$, there is a flow of $1-2^i \omega$ where every user chooses the priority lanes: 
Similar to Example \ref{ex:1}  (2)  every user chooses priority lanes in equilibrium since for a flow of $r=1-2^i \omega $ we have on each edge $e$ that 
\begin{align*} 
\omega < \frac{1}{2^i} -\omega  =\frac{1}{2} \frac{1}{2^{i-1}} (1-2^i \omega)  = \frac{1}{2}a_e r
\end{align*} which is true for 
$\omega < \frac{1}{2^{i+1}}$.
Thus the perceived cost indeed equals $1$ if the flow equals exactly
$1-2^i \omega $: 
\begin{align*}  
\tilde{\ell}_{P_i} &= \sum_{e \in P_i} \tilde{\ell}_{e^V}(1-2^i \omega) 
=  \sum_{e \in P_i} \left( \frac{1}{2^{i}} (1-2^i \omega) + \omega \right)\\
&= 2^i \cdot \left(\frac{1}{2^i}  (1-2^i \omega )+ \omega \right) = 1 . 
\end{align*} 
\item For paths $P_i$ where $\omega \in \left[\frac{1}{2^{i+1}},\frac{1}{2^i}\right] $, an equilibrium flow 
is attained at the critical value, i.e., when $ \omega=\frac{1}{2^i} r$. This flow of $r=2^i\omega$ splits among priority and regular in such a way that each user's perceived cost equals $1$, as per Observation~\ref{observation1}. 
\item For paths $P_i$ where $\omega > \frac{1}{2^i}$, the price for the priority option is too expensive as $\omega \ge \frac{1}{2} \frac{1}{2^{i-1}} 1$. Again, similar to Example \ref{ex:1} (1) in equilibrium there is a flow of $1$ of regular users with a cost of~$1$.

 \item The remaining demand is routed on path $Q$ with a cost of $1$ as well.
\end{enumerate}
Since $\omega < \frac{1}{2^{i+1}}$ is equivalent to $ i + 1< \log_2 \frac{1}{\omega}$ we denote $ i^* := \lfloor \log_2 \frac{1}{\omega}\rfloor -1$.
Let $I = \left\{i \mid  \omega \in [\frac{1}{2^{i+1}},\frac{1}{2^i}] \right\} $ which is the set of the the one or two indices of paths for case (2) above.
Then, we can express the total latency of the equilibrium flow as a function of $\omega$ as follows: 
\begin{align*}  
c(\omega) = & 
  \sum_{i=0}^{i^*} (1-2^i \omega)^2 +\sum_{i \in I}(2^{i}\omega)^2 \\
  &+ \left((k+1)-\sum_{i \in I}(2^{i}\omega) - (\sum_{i=0}^{i^*} (1-2^i \omega) \right) \cdot 1  
\end{align*}
where the first sum corresponds to the cost of the paths that have priority users only and second term for users at the critical point and the third term accounts for the remaining flow that uses path $Q$ at a cost and latency of $1$. By expanding we get $c(\omega) =$
$$ \sum_{i=0}^{i^*} (1-2^{i+1} \omega + 2^{2i} \omega^2) + \sum_{i \in I}(2^{i}\omega)^2 + (k+1)-\sum_{i \in I} (2^{i}\omega) -\sum_{i=0}^{i^*} (1-2^i \omega)  $$ which allow us to cancel terms and obtain
$$ c(\omega) = \sum_{i=0}^{i^*} (-2^{i} \omega + 2^{2i} \omega^2) +  k+1 + \sum_{i \in I}  (2^{i}\omega)^2 - \sum_{i \in I} (2^{i}\omega).$$
By rewriting the first geometric sum and omitting some positive terms and  bounding the last  sum with $-2$ we can lower bound
$$ c(\omega) \ge -\omega \left(2^{i^*+1} - 1\right)  +  k-1 $$
 and substitute  $ i^* $ with $ \log_2 \frac{1}{\omega} -1 \ge  \lfloor \log_2 \frac{1}{\omega}\rfloor -1$ to further bound 

$$ c(\omega) \ge  -\omega \left(\frac{1}{\omega} - 1\right) +  k-1 = - 1 + \omega +  k   - 1 \ge k -2.$$
Therefore, we conclude that the price of anarchy for this instance is at least ${(k -2)}/{\frac{3(k+1)}{4}} \rightarrow \frac{4}{3}$
as  $k \rightarrow \infty$. 
\end{proof}
As a natural variant one might wonder wether one can obtain improved price of anarchy bounds by allowing the priority option for only a subset of the edges, while still having the restriction of one uniform price for all edges that are equipped with this option. 
However, since voluntary priority options are strategically weaker than mandatory taxation, this additional flexibility does not help in general: Taking a closer look at the previous proof show that this additional flexibility does not improve the total latency in the example above: Restricting the priority option to a subset of the edges only would have an impact on the paths $P_i$ mentioned under (1) and (2) above. However, not allowing the option only (weakly) increases the flow on these paths and, hence, increases total latency. So we conclude:
\begin{corollary}
       There exist instances $(G,r,\tilde{\ell},\omega)$ with linear cost functions in which optimal uniform pricing for any subset of edges has a PoA of $4/3 - \epsilon$.
\end{corollary}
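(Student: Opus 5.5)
The plan is to reuse the family of instances $(G_k,r_k,\tilde{\ell}_k,\omega_k)$ from the proof of the preceding theorem. The only new freedom is that the operator chooses a subset $S\subseteq E$ of edges on which the priority option is offered, all at the same price $\omega$. So I would show that, for every pair $(S,\omega)$, some equilibrium still has total latency at least $k-2$. Since Theorem~\ref{theo:leq}(b) says all equilibria have the same total cost, this bound holds for every equilibrium. The optimum is still $\frac34(k+1)$, so the ratio tends to $4/3$, and for $k$ large enough we get PoA $\ge 4/3-\epsilon$.

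\textbf{Step 1: every used path has cost exactly $1$.} Path $Q$ has no priority option that matters, since its latency is the constant $1$. The argument from the previous proof carries over verbatim. Either some flow uses $Q$, or all $k+1$ units go over $P_0,\dots,P_k$, in which case some $P_i$ carries at least $1$ unit and its latency is at least $1$. Since perceived cost is at least latency, every used path in equilibrium has perceived cost exactly $1$.

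\textbf{Step 2: total latency is governed by the flow on the $P_i$.} Let $x_i$ be the flow on $P_i$. Users on $Q$ incur $1$ each, and users on $P_i$ incur latency at most their perceived cost $1$. The total latency is therefore
\[
(k+1-\textstyle\sum_i x_i)\cdot 1+\sum_i L_i(x_i),
\qquad L_i(x_i)=x_i^2 .
\]
Here $L_i(x_i)=x_i^2$ is the true latency of $P_i$: it has $2^i$ edges, each with $\ell_e(x)=x/2^i$. This expression equals $(k+1)-\sum_i x_i(1-x_i)$, which is increasing in $x_i$ whenever $x_i\ge 1/2$. I expect every equilibrium $x_i$ to lie in $[1/2,1]$. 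The reason is that on $P_i$ the perceived cost is at least the latency $x_i$, and, since regular users are delayed by priority users and priority users pay $\omega$, it is also at least $\min(x_i+\ldots)$ — more precisely, each path cost is at least $2x_i\cdot\frac12$ plus nonnegative fee or waiting terms.

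\textbf{Step 3: monotonicity in $S$, the main obstacle.} I would show that removing the priority option from edges of $P_i$ can only (weakly) lower the perceived cost of $P_i$ at a given flow. Then, to restore cost $1$, the equilibrium flow $x_i$ must weakly increase. This is the step I expect to be hardest. On a single edge with total flow $x$, the equilibrium edge cost without the option is $\frac12 a_e x+b_e$. With the option, the cost is at least this: by Observation~\ref{observation1} and Example~\ref{ex:1}, the cost is either the same, or lies in $[\omega_e+b_e,2\omega_e+b_e]$ at the threshold, or equals $\frac12 a_ex+b_e+\omega_e$ above it. In every case it is $\ge \frac12 a_e x+b_e$. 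Because the edges of $P_i$ are in series with identical flow, the equilibrium path cost is monotone in $S$, so the equilibrium $x_i$ under $(S,\omega)$ is at least the one under $(E,\omega)$.

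\textbf{Step 4: conclude.} Combining Step 3 with the monotonicity from Step 2 (all $x_i\ge 1/2$), the total latency under $(S,\omega)$ is at least that under $(E,\omega)$. The previous proof bounds the latter by $k-2$. Hence the PoA is at least $(k-2)/\bigl(\tfrac34(k+1)\bigr)\to 4/3$, as claimed.
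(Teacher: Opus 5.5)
Your proposal is correct and follows the paper's own argument: withholding the priority option can only lower an edge's equilibrium cost at a given load, so each $x_i$ weakly increases. Since $x_i\ge\frac12$, the cost $(k+1)-\sum_i x_i(1-x_i)$ then weakly increases; this condition is needed, and the paper's ``hence, increases total latency'' leaves it implicit. One small repair: your justification of $x_i\ge\frac12$ points the wrong way (lower bounds on perceived cost only give $x_i\le 1$). Instead use that the all-regular route on $P_i$ costs at most $\sum_{e\in P_i}a_e x_i=2x_i$, or note that you only need $x_i\ge\frac12$ for the full-option equilibrium, whose explicit flows $1-2^i\omega$, $2^i\omega$, $1$ already satisfy it.
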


\section{Conclusion}\label{se5}


The key insight of this paper is that properly designed priority pricing options  allows to steer self-interested user behavior to achieve (near-)optimal system performance. In that sense, offering priority options and corresponding optimized prices allows to mimic the well known marginal cost pricing \cite{beckmann1956}, but then as a strategic and optional choice of the users, rather than a mandatory taxation for everybody.  On the other hand, priority lane pricing is strategically weaker: In contrast to mandatory edge costs, it \emph{cannot} (effectively) eliminate edges from the network. Future work could therefore analyze how exactly total perceived cost depends on varying prices. Obvious generalizations are also to consider latency functions other than linear: While many of our proofs would generalize easily, non-linear functions lack the simplicity of the effective cost as illustrated in Figure~\ref{fig:lf}, and fundamentally other effects occur. Finally, while we showed that uniform pricing does generally not suffice to improve social outcomes, it could be true that slightly less restrictive pricing options would suffice to come provably close to socially optimal outcomes.  



\bibliography{mybibfile}

\end{document}